\begin{document}

\newcommand{\tr}{\textsf{triangle}}
\newcommand{\li}{\textsf{line}}
\newcommand{\xx}[3]{{#1}^{#2}_{#3}}
\newcommand{\mc}[1]{\mathcal{#1}}
\newcommand{\BRi}[1]{\mathcal{B}^{R_i}(#1)}
\newcommand{\bK}{\mc{\bar K}}
\newcommand{\bq}{\bar q}
\newcommand{\Ex}[1]{\mathop{\mathrm{E}}\left[{#1}\right]}
\newcommand{\ex}{\mathop{\mathrm{E}}}
\newcommand{\pr}[1]{\Pr\left[{#1}\right]}

\newtheorem{theorem}{Theorem}
\newtheorem{claim}{Claim}
\newtheorem{corollary}[theorem]{Corollary}
\newtheorem{proposition}[theorem]{Proposition}
\newtheorem{lemma}[theorem]{Lemma}
\newtheorem{property}{Property}
\newtheorem{fact}[theorem]{Fact}

\newtheorem{definition}{Definition}
\newtheorem{example}{Example}
\newtheorem{assumption}[theorem]{Assumption}

\newtheorem{remark}{Remark}

\addauthor{pc}{red}

\title{Bounds on the Price of Anarchy for a More General Class of Directed Graphs in Opinion Formation Games}

\author{%
  Po-An Chen\thanks{%
    Institute of Information Management, National Chiao Tung University, Taiwan.
    Email: poanchen@nctu.edu.tw.
  }
  \footnote{Corresponding author.}
  \and
  Yi-Le Chen\thanks{%
    Institute of Information Management, National Chiao Tung University, Taiwan.
    Email: jairachen78@gmail.com.
  }
  \and
  Chi-Jen Lu\thanks{%
    Institute of Information Science, Academia Sinica, Taiwan.
    Email: cjlu@iis.sinica.edu.tw. Also with Department of Computer Science, National Chiao-Tung University, Taiwan.
  }
}
\date{}

\maketitle

\begin{abstract}
In opinion formation games with directed graphs, a bounded
price of anarchy is only known for weighted Eulerian graphs. Thus, we
bound the price of anarchy for a more general class of directed graphs
with conditions intuitively meaning that each node does not influence the
others more than she is influenced, where the bounds depend on such
difference (in a ratio). We also show that there exists an example just
slightly violating the conditions with an unbounded price of anarchy.
%
\end{abstract}



\section{Introduction}
In a society or community, individuals and their relationships form a social network.
For some matter that each individual gets to express her own opinion about,
individuals influence each others regarding it through such a social network.
For example, the matter can be adopting a new innovation/product, and the individuals are potential users/consumers while their opinions could be tendency to adopt the innovation or the preference toward the product;
for some political issue that may need public consensus, the public have different opinions or thoughts about it.
In any case, an individual is often affected by her friends/neighbors in the social network when making up her mind.
The opinion forming process in a social network can be naturally thought as an opinion influencing and updating dynamics.
This has attracted researchers' interest a while ago in mathematical sociology, and more recently in theoretical computer science.

DeGroot \cite{degroot} modeled the opinion formation process by associating each individual with a numeric-value opinion
and letting the opinion be updated by a weighted average of the opinions from her friends and her own,
where the weights represent how much she is influenced by her friends.
This update dynamics will converge to a fixed point in which all individuals hold the same opinion, i.e., a consensus.
However, we can easily observe that in the real world, the consensus is difficult to reach.
Friedkin and Johnson \cite{friedkin} differentiated an \emph{expressed opinion} that each individual in the networks updates over time
from an \emph{internal opinion} that each individual is born with and stays unchanged.
Thus, an individual would be always influenced by her inherent belief,
and the dynamics converges to an unique equilibrium, which may not be a consensus.

Bindel et al. \cite{bindel} viewed the updating rule mentioned above equivalently as each player updating her expressed opinion to minimize her quadratic individual cost function,
which consists of the disagreement between her expressed opinion and those of her friends, and the difference between her expressed and internal opinions.
They analyzed how socially good or bad the system can be at equilibrium compared to the optimum solution in terms of the \emph{price of anarchy} \cite{koutsoupias:papadimitriou:anarchy},
i.e., the ratio of the social cost at the worst equilibrium to the optimal social cost. (Notions of equilibria will be given in Section~2.)
The price of anarchy is at most $9/8$ in undirected graphs and is unbounded in directed graphs (due to a star graph with a center only influencing the others but is not influenced at all by the others, or even directed bounded-degree trees with degrees high enough).
Nevertheless, a bounded price of anarchy can be obtained for weighted Eulerian graphs, in particular, a tight upper bound of $2$ for directed cycles and an upper bound of $d+1$ for $d$-regular graphs.
Another work closely related to that of Bindel et al. is by Bhawalkar et al. \cite{bhawalkar}.
The individual cost functions are assumed to be ``locally-smooth" in the sense of \cite{roughgarden3} and may be more general than quadratic, for example, convex. The price of anarchy for undirected graphs with quadratic cost functions is at most $2$.
They also allowed social networks to change by letting players choose $k$-nearest neighbors through opinion updates and bounded the price of anarchy.
On the other hand, Chierichetti et al. \cite{chierichetti} considered the games with discrete preferences,
where an expressed and internal opinions are chosen from a discrete set and distances measuring ``similarity" between opinions correspond to costs.

When graphs are directed, a bounded price of anarchy is only known for weighted Eulerian graphs where the total incoming weights equal to the total outgoing weights at each node \cite{bindel,bhawalkar}, which may seem rather restricted.
Thus, we are interested \emph{to bound the price of anarchy for games with directed graphs more general than weighted Eulerian graphs} (even with just quadratic individual cost functions) in this article. Note that although the result of \cite{bhawalkar} is indeed for directed graphs and gives bounded price of anarchy, their setting is different from ours. In their model, the weights on the $k$ neighbors are uniform, and more importantly, the $k$ neighbors of a node is not fixed, as its action in the game includes choosing its $k$ neighbors in addition to its opinion. Therefore, it is related to but different from what we propose to tackle here.
We first bound the price of anarchy for a more general class of directed graphs with conditions intuitively meaning that each node does not influence the others more than she is influenced by herself and the others, where the bounds depend on such influence differences (in a ratio).
This generalizes the previous results on directed graphs, and recovers and matches the previous bounds in some specific classes of (directed) Eulerian graphs.
We then show the existence of an example that just slightly violates the conditions but with an unbounded price of anarchy.
We further propose more research directions in the discussions and future work.


\section{Preliminaries}
We describe a social network as a weighted graph $(G,\mathbf{w})$ for directed graph $G=(V,E)$ and matrix $\mathbf{w}=[w_{ij}]_{ij}$.
The node set $V$ of size $n$ is the selfish players, and the edge set $E$ is the relationships between any pair of nodes.
The edge weight $w_{ij}\geq 0$ is a real number and represents how much player~$i$ is influenced by player~$j$;
note that weight $w_{ii}$ can be seen as a self-loop weight, i.e., how much player~$i$ influences (or is influenced by) herself.
Each (node) player has an internal opinion $s_i$, which is unchanged and not affected by opinion updates.
An \emph{opinion formation game} can be expressed as an instance $(G,\mathbf{w},\mathbf{s})$ that combines weighted graph $(G,\mathbf{w})$ and vector $\mathbf{s}=(s_i)_i$.
Each player's strategy is an expressed opinion $z_i$, which may be different from her $s_i$ and gets updated.
Both $s_i$ and $z_i$ are real numbers.
The individual cost function of player~$i$ is
\[C_i(\mathbf{z})=w_{ii}(z_i-s_i)^2+\sum_{j\in N(i)}w_{ij}(z_i-z_j)^2=w_{ii}(z_i-s_i)^2+\sum_j w_{ij}(z_i-z_j)^2,\]
where $\mathbf{z}$ is the strategy profile/vector and $N(i)$ is the set of the neighbors of $i$, i.e., $\{j:j\neq i,w_{ij}>0\}$.
It measures the disagreement between her expressed opinion and those of her friends, and the difference between her expressed and internal opinions.
Other functions could also be used other than the square one here, for example, convex functions \cite{bhawalkar}.
Each node minimizes her cost $C_i$ by choosing her expressed opinion $z_i$.
We analyze the game when it stabilizes, i.e., at equilibrium.

\subsubsection*{Equilibria and the Price of Anarchy}
In a (pure) Nash equilibrium $\mathbf{z}$, each player~$i$'s strategy is $z_i$ such that given $\mathbf{z}_{-i}$ (i.e., the opinion vector of all players except $i$) for any other $z'_i$,
\[C_i(z_i,\mathbf{z}_{-i}) \leq C_i(z'_i,\mathbf{z}_{-i}).\]
According to \cite{bindel,bhawalkar}, in such an equilibrium, the condition
\[z_i=\frac{w_{ii}s_i+\sum_{j\neq i}w_{ij}z_j}{w_{ii}+\sum_{j\neq i}w_{ij}} \]
holds for any player $i$. This can be shown by taking the derivative of $C_i$ w.r.t. $z_i$, setting it to $0$ for each $i$,
and solving the equality system since very player~$i$ minimizes $C_i$.
Note that $C_i$ is continuously differentiable.

The social cost function here is $C(\mathbf{z})=\sum_{i}C_i(\mathbf{z})$, the sum of the individual costs.
Nash equilibria can be far from the (centralized) social optimum in terms of a social cost \cite{NRTV07}.
To measure the (in)efficiency of equilibria, the price of anarchy \cite{koutsoupias:papadimitriou:anarchy} is defined as the ratio of the worst equilibrium's social cost to the optimal social cost.

\subsubsection*{Local Smoothness}
To bound the price of anarchy, the \emph{local smoothness} framework developed by Roughgarden and Schoppmann \cite{roughgarden3} is a promising analysis technique in algorithmic game theory.
It has been applied in \cite{bhawalkar} to obtain the price of anarchy bounds there and similar techniques have been used in many other games \cite{roughgarden,chen:dekeijzer}. The local smoothness technique is slightly different from the smoothness techniques of \cite{roughgarden,chen:dekeijzer}. The local smoothness technique is sometimes more suitable since it gives tight(er) bounds while the smoothness technique does not in some games.
We briefly summarize this technique in the following. The inequality intuitively means that summing up individual costs after some unilateral local deviations to $o_i$'s from any strategy profile $\mathbf{z}$ can still be upper bounded by a combination of the social costs $C(\mathbf{z})$ and $C(\mathbf{o})$ (where the derivative term accounts for localness), implying that $C(\mathbf{z})$ is not too far from $C(\mathbf{o})$. As in \cite{roughgarden3}, we assume that the cost function $C_i$ of each player $i$ is continuously differentiable w.r.t. her strategy.
\begin{definition}
A cost-minimization game is $(\lambda,\mu)$-locally smooth if for any strategy profiles $\mathbf{o}$ and $\mathbf{z}$,
\begin{eqnarray} \label{def:smooth}
\sum_{i} \left(C_i(z_i,\mathbf{z}_{-i})+(o_i-z_i)\frac{\partial}{\partial z_i}C_i(z_i,\mathbf{z}_{-i})\right) \leq \lambda C(\mathbf{o})+\mu C(\mathbf{z}).
\end{eqnarray}
\end{definition}
We have the following from an extension theorem of \cite{roughgarden3}.
\begin{theorem} \label{thm:extension}
If $\sigma$ is a correlated equilibrium and $\mathbf{o}$ is the social optimum in a $(\lambda,\mu)$-locally smooth game, for $\lambda>0$ and $\mu <1$, then
the correlated (thus, pure and mixed) price of anarchy, $E_{\mathbf{z}\sim\sigma}[C(\mathbf{z})]/C(\mathbf{o})$, is at most $\lambda/({1-\mu})$.
\end{theorem}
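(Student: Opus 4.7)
\medskip
\noindent\textbf{Proof proposal.} The plan is to apply the local smoothness inequality pointwise with $\mathbf{o}$ equal to the social optimum and $\mathbf{z}$ a random realization of $\sigma$, take expectations, and kill the partial-derivative term on the left using a first-order optimality condition implied by the correlated equilibrium. Concretely, (\ref{def:smooth}) holds for every fixed pair $(\mathbf{o},\mathbf{z})$; drawing $\mathbf{z}\sim\sigma$ and using linearity of expectation gives
\[
\ex_{\mathbf{z}\sim\sigma}[C(\mathbf{z})] \;+\; \sum_i \ex_{\mathbf{z}\sim\sigma}\!\left[(o_i-z_i)\tfrac{\partial}{\partial z_i}C_i(z_i,\mathbf{z}_{-i})\right] \;\leq\; \lambda C(\mathbf{o}) + \mu\,\ex_{\mathbf{z}\sim\sigma}[C(\mathbf{z})].
\]

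The crux is to argue that each summand in the middle vanishes. Fix a player~$i$ and condition on her recommendation $z_i$. By the definition of a correlated equilibrium, $z_i$ must minimize the conditional expected cost $y\mapsto \ex[C_i(y,\mathbf{z}_{-i})\mid z_i]$, for otherwise the recommendation-dependent swap $\phi(z_i)=\arg\min_y \ex[C_i(y,\mathbf{z}_{-i})\mid z_i]$ would be a strictly profitable deviation. Since $C_i$ is continuously differentiable in $z_i$, the interior first-order condition yields $\ex[\tfrac{\partial}{\partial z_i}C_i(\mathbf{z})\mid z_i]=0$ almost surely, and because $(o_i-z_i)$ is measurable with respect to $z_i$, the tower property gives
\[
\ex\!\left[(o_i-z_i)\tfrac{\partial}{\partial z_i}C_i(\mathbf{z})\right] \;=\; \ex\!\left[(o_i-z_i)\,\ex[\tfrac{\partial}{\partial z_i}C_i(\mathbf{z})\mid z_i]\right] \;=\; 0.
\]

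Dropping this zero term from the previous display and rearranging using $\mu<1$ yields $\ex_{\mathbf{z}\sim\sigma}[C(\mathbf{z})]\leq \lambda C(\mathbf{o})/(1-\mu)$, which is exactly the claimed correlated price-of-anarchy bound. Since pure and mixed Nash equilibria are particular correlated equilibria, the same inequality covers them as well.

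The hard part will be justifying the first-order step: one needs the minimizer $z_i$ to be interior to the strategy set and one needs to interchange differentiation with the conditional expectation. Both are unproblematic in the opinion-formation setting studied here, where strategies are unconstrained real numbers and each $C_i(\cdot,\mathbf{z}_{-i})$ is a smooth convex quadratic, so the FOC identifies the unique minimizer and dominated convergence validates the interchange; in the fully general extension theorem of Roughgarden and Schoppmann these properties hold by the standing continuous-differentiability and integrability assumptions explicitly imported just before Definition~1.
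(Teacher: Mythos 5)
Your proof is correct: the paper itself gives no proof of this theorem, importing it directly as "an extension theorem of" Roughgarden and Schoppmann, and your argument is precisely the standard one behind that citation (take expectations in the local smoothness inequality, use the correlated-equilibrium first-order condition $\ex[\frac{\partial}{\partial z_i}C_i(\mathbf{z})\mid z_i]=0$ plus the tower property to kill the derivative term, then rearrange using $\mu<1$). Your closing remarks on interiority and the interchange of differentiation with conditional expectation correctly identify the only technical caveats, and both are indeed harmless here since each $C_i(\cdot,\mathbf{z}_{-i})$ is a smooth convex quadratic on an unconstrained real line.
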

With this, our goal becomes finding suitable $\lambda$ and $\mu$ values to satisfy Inequality~(\ref{def:smooth}),
which immediately gives price-of-anarchy bounds.

\section{Bounds on the Price of Anarchy}
We first generalize the result \cite{bindel} about bounded price of anarchy for weighted Eulerian graphs to a more general class of directed graphs with conditions intuitively meaning that each node does not influence the others more than she is influenced (by herself and the others).
We then show that there exists an example that just slightly violates the conditions and gives an unbounded price of anarchy.

Recall from \cite{bindel} that in a weighted Eulerian graph, $\sum_{j\neq i}w_{ij}=\sum_{j\neq i}w_{ji}$ for every node $i$, which means that the influence each node exerts on the others is exactly the same as that it receives from the others. Here we relax the condition and obtain the following.
\begin{theorem} \label{thm:bound}
Consider any value $\epsilon>0$ and any weighted graph $(G,\mathbf{w})$ such that for any node $i$,
\begin{eqnarray} \label{thm:condition}
\frac{\sum_{j\neq i}w_{ji}}{w_{ii}+\sum_{j\neq i}w_{ij}}\leq\frac{1}{1+\epsilon}.
\end{eqnarray}
Then the price of anarchy for any opinion formation game on $(G,\mathbf{w})$ is at most $1+\frac{1}{\epsilon}$.
\end{theorem}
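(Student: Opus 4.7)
The plan is to apply the local smoothness machinery of Theorem~\ref{thm:extension} and aim for the simplest possible parameter choice, $(\lambda,\mu)=(1+1/\epsilon,\,0)$, which immediately yields the claimed bound $\lambda/(1-\mu)=1+1/\epsilon$. Thus the entire task is to verify that the left-hand side of Inequality~(\ref{def:smooth}) is bounded by $(1+1/\epsilon)\,C(\mathbf{o})$ for every pair of profiles $\mathbf{o},\mathbf{z}$, with no $C(\mathbf{z})$ term needed at all.

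Since $C_i(\cdot,\mathbf{z}_{-i})$ is a quadratic in $z_i$ with second derivative $2D_i$, where $D_i := w_{ii}+\sum_{j\neq i} w_{ij}$, Taylor's formula is exact and gives
\[
C_i(z_i,\mathbf{z}_{-i}) + (o_i-z_i)\,\frac{\partial}{\partial z_i}C_i(z_i,\mathbf{z}_{-i}) = C_i(o_i,\mathbf{z}_{-i}) - (o_i-z_i)^2\,D_i.
\]
Summing over $i$, the local smoothness LHS becomes $\sum_i C_i(o_i,\mathbf{z}_{-i}) - \sum_i (o_i-z_i)^2 D_i$, so the problem reduces to bounding $\sum_i C_i(o_i,\mathbf{z}_{-i})$ in a way that exactly absorbs the Taylor correction.

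For that bound, I will split each cross-term via $(o_i - z_j)^2 = \bigl((o_i-o_j)+(o_j-z_j)\bigr)^2$ and apply the Young-type inequality $(a+b)^2 \le (1+1/\epsilon)a^2 + (1+\epsilon)b^2$, to obtain $(o_i-z_j)^2 \le (1+1/\epsilon)(o_i-o_j)^2 + (1+\epsilon)(o_j-z_j)^2$. Plugging into $C_i(o_i,\mathbf{z}_{-i})$ and summing over $i$ yields
\[
\sum_i C_i(o_i,\mathbf{z}_{-i}) \le (1+1/\epsilon)\,C(\mathbf{o}) + (1+\epsilon)\sum_i\sum_{j\neq i} w_{ij}(o_j-z_j)^2.
\]
A swap of the double sum turns the remainder into $(1+\epsilon)\sum_j (o_j-z_j)^2 \sum_{i\neq j} w_{ij}$, and at this point hypothesis~(\ref{thm:condition}) applied at node $j$ gives $\sum_{i\neq j} w_{ij} \le D_j/(1+\epsilon)$. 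This collapses the remainder to exactly $\sum_j (o_j-z_j)^2 D_j$, which cancels the Taylor correction and leaves the desired bound $(1+1/\epsilon)\,C(\mathbf{o})$.

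The main obstacle is the calibration of the Young parameter: it must be chosen exactly equal to $\epsilon$ so that the product $(1+\epsilon)\cdot\frac{1}{1+\epsilon}=1$ makes the $(o-z)^2$ remainder line up perfectly with the Taylor correction. Any other choice either leaves a positive $(o-z)^2$ leftover, which would have to be absorbed into $C(\mathbf{z})$ and is not obviously possible since $\sum_i (o_i-z_i)^2 D_i$ does not directly compare with $C(\mathbf{z})$ without further graph assumptions, or it inflates the coefficient of $C(\mathbf{o})$. Once this calibration is pinned down, the rest is a routine index swap plus a single use of the graph hypothesis, so the substantive content of the argument is the exact matching between the $1/(1+\epsilon)$ in the hypothesis and the $1+\epsilon$ in Young's inequality.
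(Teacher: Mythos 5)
Your proposal is correct and follows essentially the same route as the paper: the exact quadratic Taylor identity you use is just a cleaner packaging of the paper's algebraic rearrangement into $\sum_i C_i(o_i,\mathbf{z}_{-i})-\sum_i(o_i-z_i)^2 D_i$, and the subsequent Young inequality with parameter $\epsilon$, the index swap, and the application of condition~(\ref{thm:condition}) to cancel the $(o_i-z_i)^2$ terms are identical, yielding $(\lambda,\mu)=(1+1/\epsilon,0)$.
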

To make the price of anarchy bound as small as possible, one chooses $\epsilon$ to be the largest possible value such that the condition in \eqref{thm:condition} still holds for every node $i$. Roughly speaking, $\epsilon$ captures a bound on how much more (as a ratio) a node is influenced by itself and the others than it influences the others.
\begin{proof}
We want to show $(\lambda,\mu)$-local smoothness for any opinion formation game on such a weighted graph.
Let $\mathbf{z}$ be a Nash equilibrium and $\mathbf{o}$ the social optimum.
We start by plugging the definition of $C_i$ into the left-hand side of~(\ref{def:smooth}) and rearranging terms as follows
\begin{eqnarray*}
\lefteqn{\sum_{i}\left(w_{ii}(z_i-s_i)^2 + \sum_{j\neq i}w_{ij}(z_i-z_j)^2\right)+ 2\sum_{i}(o_i-z_i)\left(w_{ii}(z_i-s_i) + \sum_{j\neq i}w_{ij}(z_i-z_j)\right)}\\
&=&\sum_{i}\sum_{j\neq i}w_{ij}\left((z_i-z_j)^2+2(o_i-z_i)(z_i-z_j)\right)+ \sum_{i}w_{ii}\left((z_i-s_i)^2+2(o_i-z_i)(z_i-s_i)\right)\\
&=&\sum_{i}\sum_{j\neq i}w_{ij}\left((o_i-o_j+o_j-z_j)^2-(o_i-z_i)^2\right)+\sum_{i}w_{ii}\left((o_i-s_i)^2-(o_i-z_i)^2\right).
\end{eqnarray*}

By the Cauchy-Schwarz inequality $(x+y)^2\leq(a^2+b^2)(\frac{x^2}{a^2}+\frac{y^2}{b^2})=(1+\frac{b^2}{a^2})x^2+(1+\frac{a^2}{b^2})y^2$
for $x=o_i-o_j$, $y=o_j-z_j$, and $\frac{a^2}{b^2}=\epsilon$,
the expression above is at most
\begin{eqnarray*}
&&\sum_{i}\sum_{j\neq i}w_{ij}((1+1/\epsilon)(o_i-o_j)^2+(1+\epsilon)(o_j-z_j)^2)-\sum_{i}\sum_{j\neq i}w_{ij}(o_i-z_i)^2\\ &&+\sum_{i}w_{ii}((o_i-s_i)^2-(o_i-z_i)^2) \\
&\leq&(1+1/\epsilon)\sum_{i}\left(\sum_{j\neq i}w_{ij}(o_i-o_j)^2+w_{ii}(o_i-s_i)^2\right)\\
&&+(1+\epsilon)\sum_{i}\sum_{j\neq i}w_{ji}(o_i-z_i)^2-\sum_{i}\sum_{j\neq i}w_{ij}(o_i-z_i)^2-\sum_{i}w_{ii}(o_i-z_i)^2.
\end{eqnarray*}
The inequality follows from rearranging terms and the inequality $w_{ii}(o_i-s_i)^2 \le (1+1/\epsilon)w_{ii}(o_i-s_i)^2$.
By the condition $\sum_{j\neq i}w_{ji}/(w_{ii}+\sum_{j\neq i}w_{ij})\leq 1/(1+\epsilon)$ for all $i$,
we then have $\lambda=1+1/\epsilon$ and $\mu=0$.
By Theorem~\ref{thm:extension}, the upper bound of $\lambda/(1-\mu)$ is $1+1/\epsilon$.
\end{proof}
Our bound recovers the tight bound of $2$ for directed cycles in \cite{bindel} since $\epsilon=1$ in any directed cycle with $w_{ii}=1$ and $\sum_{j\neq i}w_{ij}=\sum_{j\neq i}w_{ij}=1$ for all $i$. Our bound also matches the upper bound of $d+1$ for $d$-regular Eulerian graphs in \cite{bindel} since $\epsilon=1/d$ in any $d$-regular Eulerian graph with $w_{ii}=1$ and $\sum_{j\neq i}w_{ij}=\sum_{j\neq i}w_{ij}=d$ for all $i$.

\begin{theorem} \label{thm:example}
There exists an example just slightly violating the conditions~(\ref{thm:condition}) in Theorem~\ref{thm:bound} with an unbounded price of anarchy.
\end{theorem}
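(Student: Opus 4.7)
The plan is to exhibit an explicit opinion formation game whose weighted digraph violates condition~(\ref{thm:condition}) at only a single node, and whose price of anarchy is unbounded. A natural candidate is the star already used by Bindel et al.~\cite{bindel} to witness unbounded PoA on directed graphs. Concretely, I take a center $c$ with self-loop weight $w_{cc}>0$ and no out-edges to the leaves, together with $n$ leaves $i=1,\dots,n$, each carrying a self-loop $w_{ii}>0$ and a single directed edge of weight $w_{ic}$ aimed at $c$. Every leaf has $\sum_{j\neq i}w_{ji}=0$ and so trivially satisfies~(\ref{thm:condition}), whereas the center has ratio $n w_{ic}/w_{cc}$, which can be tuned to sit just above $1/(1+\epsilon)$ for any prescribed $\epsilon>0$, making the violation arbitrarily slight.

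With the topology fixed, I use adversarial internal opinions---say $s_c=1$ and $s_i=0$ for every leaf---and solve the Nash equilibrium first-order conditions in closed form. Since the center is uninfluenced by the leaves, the equilibrium forces $z_c=s_c=1$, and each leaf settles at $z_i=w_{ic}/(w_{ii}+w_{ic})$, contributing a $\Theta(1)$ disagreement cost; summing, the NE social cost grows linearly with~$n$. The social optimum instead lets the center drift toward the leaves, amortizing all $n$ disagreement terms across a single perturbation at~$c$; exploiting the symmetry among the leaves yields closed-form expressions for $z^*_c$ and the common leaf value $z^*_i$, and the resulting optimal social cost is asymptotically smaller than the NE social cost, so that letting $n\to\infty$ drives $C(\mathbf{z})/C(\mathbf{o})$ to infinity.

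The main obstacle is the joint parameter tuning. A naive scaling that pins the center's violation ratio at $1+\delta$ while $n\to\infty$ forces $w_{ic}$ to shrink like $1/n$; pushing this through the algebra collapses the PoA to a constant near~$2$. To avoid this degeneracy the weights $w_{cc}$, $w_{ic}$, and the leaf self-loops $w_{ii}$ have to be scaled jointly with $n$ so that the center stays genuinely stubborn at equilibrium while the leaves are still strongly pulled by it, even as the violation itself stays arbitrarily small. Once such a scaling is identified, verifying both the slightness of the violation of~(\ref{thm:condition}) and the divergence of the price of anarchy reduces to routine algebraic manipulation of the closed-form NE and OPT expressions.
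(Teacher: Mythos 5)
There is a genuine gap, and it sits exactly where you flag ``the main obstacle'': the joint scaling you hope for does not exist, because the star topology provably cannot yield an unbounded price of anarchy once the center's ratio in condition~(\ref{thm:condition}) is held near $1$. To see this, give leaf $i$ self-loop $a_i$, edge weight $b_i$ toward the center, and internal opinion $s_i$, and give the center self-loop $W$ and internal opinion $1$. At the Nash equilibrium the center plays $1$ and each leaf best-responds, contributing $k_i(1-s_i)^2$ with $k_i = a_ib_i/(a_i+b_i)$, so $C(\mathbf{z}) = \sum_i k_i(1-s_i)^2$. In the optimum, once $z_c$ is fixed each leaf again best-responds and contributes $k_i(z_c-s_i)^2$, so the optimal cost is $\min_{z_c}\bigl(W(z_c-1)^2 + \sum_i k_i(z_c-s_i)^2\bigr) \geq W\sum_i k_i(1-s_i)^2/(W + \sum_i k_i)$. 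Hence the price of anarchy is at most $1 + \sum_i k_i/W \leq 1 + \sum_i b_i/W$, i.e., at most $1$ plus the center's violation ratio, for \emph{every} choice of $a_i$, $b_i$, $W$, $n$, and internal opinions. Keeping that ratio at $1+\delta$ therefore caps the price of anarchy at $2+\delta$; the ``routine algebraic manipulation'' you defer to cannot succeed. The obstruction is structural: in a depth-one graph a single deviation at the center absorbs all of the disagreement, so the optimum can never be more than a constant factor below the equilibrium unless the center's outgoing influence dwarfs its own stubbornness, which is precisely a large violation of~(\ref{thm:condition}).

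The paper escapes this by reading ``slightly violating'' differently and by using depth. It builds a random directed tree of depth $L = \log_p n$ in which every node points to its parent with weight $1$, has self-loop $\sqrt{p}-1$, and has in-degree $1$ or $2$ with mean $p$ slightly above $1$. Only the in-degree-$2$ nodes violate~(\ref{thm:condition}) (and they violate it substantially, with ratio about $2/\sqrt{p}$); ``slight'' refers to the \emph{fraction} of violating nodes, which concentrates near $p-1$. Each level then contributes equilibrium cost about $p-\sqrt{p}$ while the optimum stays $O(1)$, so the price of anarchy is $\Omega(\log n)$ by accumulation over $\log_p n$ levels. If you want to salvage your approach you must either let the center's ratio grow with $n$ (Bindel et al.'s original star, which is not a slight violation) or abandon the depth-one topology in favor of a deep construction like the paper's.
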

\begin{proof}
We show the existence by a probabilistic argument. Consider a directed tree in which each node, except the root, has out-degree $1$ and this out-edge is directed to its parent node.
Let the in-degree of any node, except the leaves, at level~$i$ be a random variable $X_i \in \{1, 2\}$.
All $X_i$'s are independent from each other, and have the same expectation $E[X_i]=p$, for a constant $p$ slightly above $1$.
The root has self-loop weight $\sqrt{p}$, and all the nodes at level~$i$ have self-loop weight $\sqrt{p}-1$ and weight $1$ to its parent at level $i-1$.
Let the root be at level~$0$ and let the leaves be at level $L=\log_p n$. Let the root have internal opinion $1$ and all others have $0$.

At Nash equilibrium, the root expresses $z^{(0)}=1$ and has cost $c^{(0)}=0$, while it is easy to show by induction that every node at level $i$ expresses $z^{(i)}=p^{-i/2}$ (as $z^{(i)}=z^{(i-1)}/\sqrt{p}$) and has cost $c^{(i)}=(\sqrt{p}-1+(1-\sqrt{p})^2) p^{-i} = (p-\sqrt{p}) p^{-i}$.
As the expected number of nodes at level $i$ equals $E[\prod_{j=1}^i X_j]= \prod_{j=1}^i E[X_j] = p^i$ by independence of all $X_i$'s, the expected social cost at Nash equilibrium is $\sum_{i=1}^L p^i c^{(i)}= (p-\sqrt{p}) L$.
On the other hand, the social optimum cost is at most $\sqrt{p}$, and hence the expected price of anarchy is $\Omega(L) = \Omega(\log n)$, for a constant $p$.
Note that the tree may have nodes with in-degree $2$, and they violate the condition \eqref{thm:condition}. However, as $p$ approaches $1$, one can show by a Chernoff bound that with high probability, the fraction of such violating nodes is very small (close to $p-1$) and the price of anarchy remains $\Omega(\log n)$. This implies the existence of a weighted directed tree such that the fraction of violating nodes is very small while the price of anarchy is unbounded when $n$ is large.
\end{proof}


\section{Discussions and Future Work}
Given the results that we just present, an immediate extension may be to give more refined bounds for classes of directed graphs or various subclasses.
Another natural direction is to improve the price of anarchy with so-called ``stackelberg strategies":
one way to control some nodes is to make them express something that leads to approximate the optimal social cost or minimize the price of anarchy given the uncontrolled ones still expressing selfishly.
If one can control an arbitrary number of nodes, a straightforward way is to control the nodes that violate conditions~(\ref{thm:condition}) simply making them express their opinions at social optimum and analyze the induced equilibrium reusing Theorem~\ref{thm:bound} to at least get a bounded price of anarchy.
Theorem~\ref{thm:example} implies when we can control only $k$ nodes, $k$ has to be at least the number of nodes violating their corresponding conditions to get a bounded price of anarchy. So, if we focus on lower bounds on the size of the controlled nodes for a bounded price of anarchy,
consider $\sqrt{n}$ disconnected stars of size $\sqrt{n}$ each, where for each star all the leaf nodes (except the center) are only in influenced by the center and the center is not influenced by anyone.
In this case, at least $\sqrt{n}$ centers need to be controlled or the price of anarchy bound would be unbounded.
There may exist stricter lower bounds than this.

Note that although the results of \cite{gionis,ahmadinejad} identified some $k$-subset of nodes to control,
just as we aim to do, we have a different social cost that is the sum of all nodes' individual costs from theirs,
such as the sum of all nodes' opinions in \cite{gionis}, which happens to be ``submodular" and enables applicability of greedy algorithms there.
Other ways to control nodes may include changing their internal opinions \cite{ahmadinejad}.
More broadly, designing the networks such as deleting edges (or adding edges) to perform socially better has also been briefly discussed \cite{bindel}.

\section*{Acknowledgements}
Po-An Chen is supported in part by MOST 104-2221-E-009-045. 

\bibliographystyle{plain}

\end{document}